\newcolumntype{C}[1]{>{\centering}p{#1}} 
\newtheorem{theorem}{Theorem}
\newtheorem{lemma}[theorem]{Lemma}
\newtheorem{definition}[theorem]{Definition}
\newtheorem{corollary}[theorem]{Corollary}
\newtheorem{observation}[theorem]{Observation}
\newcommand*\myleft[1]{\negthinspace\left#1}
\newcommand{\Verm}[1]{\mathcal{V}\myleft( #1 \right)}
\newcommand{\mVerm}[1]{\mathcal{V}^{\theta}_{\mathrm{m}}\myleft( #1 \right)}
\newcommand{\aVerm}[1]{\mathcal{V}^{\theta}_{\mathrm{a}}\myleft( #1 \right)}
\newcommand{\oneset}[1]{\left\{\, \mathinner{#1} \,\right\}}
\newcommand{\abs}[1]{\left|\mathinner{#1}\right|}
\newcommand{\IDX}[2]{#1_{[#2]}}               
\newcommand{\pPRE}{<_\mathrm{p}}              
\newcommand{\PRE}{\leq_\mathrm{p}}            
\newcommand{\SUF}{\leq_\mathrm{s}}            
\newcommand{\N}{\mathbb{N}}
\newcommand{\eps}{\varepsilon}
\newlength{\intervallbracketheighthalf} 
\newlength{\intervallbracketheightbelow} 
\newlength{\intervallbracketheightabove} 
\newlength{\intervalllabeldistance} 
\newcommand{\intervall}[5][(0,0)]{
	    \draw #1 to node[#5, outer sep=\intervalllabeldistance]{#4\vphantom{$\theta(b)$}} ($#1 + (#2,0)$); 
	    \coordinate(#3left)at #1;
	    \coordinate(#3right)at($#1 + (#2,0)$);
	    \draw ($(#3left) - (0,\intervallbracketheighthalf)$) -- ($(#3left) + (0,\intervallbracketheighthalf)$); 
	    \draw ($(#3right) - (0,\intervallbracketheighthalf)$) -- ($(#3right) + (0,\intervallbracketheighthalf)$); 
}
\newcommand{\intleftopen}[5][(0,0)]{
	    \draw #1 to node[#5, outer sep=\intervalllabeldistance]{#4\vphantom{$\theta(b)$}} ($#1 + (#2,0)$); 
	    \coordinate(#3left)at #1;
	    \coordinate(#3right)at($#1 + (#2,0)$);
	    \draw ($(#3right) - (0,\intervallbracketheighthalf)$) -- ($(#3right) + (0,\intervallbracketheighthalf)$); 
}
\newcommand{\addint}[6][3]{%
	\coordinate (tmpleft) at ($(#2left)!#3!(#2right)$); 
	\coordinate (tmpright) at ($(#2left)!#4!(#2right)$); 
	\coordinate (tmpmidth) at ($(tmpleft)!0.5!(tmpright)$); 
 
 	\ifnum #1=3
 	\draw ($(tmpleft)-(0,\intervallbracketheightbelow)$) -- ($(tmpleft)+(0,\intervallbracketheightabove)$); 
 	\draw ($(tmpright)-(0,\intervallbracketheightbelow)$) -- ($(tmpright)+(0,\intervallbracketheightabove)$); 
 	\fi
 	\ifnum #1=1
 	\draw ($(tmpleft)-(0,\intervallbracketheightbelow)$) -- ($(tmpleft)+(0,\intervallbracketheightabove)$); 
 	\fi
 	\ifnum #1=2
	\draw ($(tmpright)-(0,\intervallbracketheightbelow)$) -- ($(tmpright)+(0,\intervallbracketheightabove)$); 
 	\fi
	\draw (tmpmidth) node[inner sep=0pt, minimum size=0pt, outer sep=\intervalllabeldistance, label=#6:#5\vphantom{$\theta(b)$}]  {}; 
}
\newcommand{\addintbelow}[5][3]{%
	\setlength{\intervallbracketheightabove}{0.0cm}
	\addint[#1]{#2}{#3}{#4}{#5}{below}
	\setlength{\intervallbracketheightabove}{\intervallbracketheighthalf}
}
\title{Pattern Avoidability with Involution}
\author{Bastian Bischoff \qquad\qquad  Dirk Nowotka\thanks{This work has been
supported by the DFG Heisenberg grant 582013.}
\institute{Institute for Formal Methods in Computer Science \\
        Universit\"at Stuttgart, Germany}
\email{bastian.bischoff@googlemail.com \qquad\qquad
       nowotka@fmi.uni-stuttgart.de} 
}
\begin{document}

\maketitle

\begin{abstract}
  An infinte word $w$ avoids a pattern $p$
  with the involution $\theta$ if there is no
  substitution for the variables in $p$ and no involution $\theta$
  such that the resulting word is a factor of $w$.
  We investigate the avoidance of patterns with respect to the size
  of the alphabet. For example, it is shown that the pattern
  $\alpha\,\theta(\alpha)\,\alpha$ can be avoided over three letters but
  not two letters, whereas it is well known that $\alpha\,\alpha\,\alpha$
  is avoidable over two letters.
\end{abstract}

\section{Introduction}
The avoidability of patterns in infinite words is an old area of interest
with a~first systematic study going back to Thue~\cite{Thue:06,Thue:12}.
This field includes rediscoveries and studies by many authors over
the last one hundred years; see for example~\cite{Currie:05}
and~\cite{Cassaigne:02} for surveys. 
In this article, we are concerned with a variation of the theme by considering
avoidable patterns with involution.
An involution $\theta$ is a mapping such that $\theta^2$ is the identity.
We consider morphic, where $\theta(uv)=\theta(u)\theta(v)$, and antimorphic
involutions, where $\theta(uv)=\theta(v)\theta(u)$.
The subject of this article draws quite some motivation from applications in
biology where the Watson-Crick complement corresponds to an antimorphic
involution in our case.  Our considerations are more general, however, by
considering any alphabet size and also morphic involutions.

During the review phase of this article, James Currie~\cite{Currie:11}
presented a solution
for all those patterns under involution in $\{\alpha,\theta(\alpha)\}^*$
that we do not consider here, which leads to a characterization
of the avoidance index for all unary patterns under involution.

\section{Preliminaries}
Our notation is guided by what is commonly found in the literature, see for
example the first chapter of~\cite{Loth:2} as a reference.
Let $\Sigma$ be a finite alphabet of \emph{letters} and $\Sigma^*$ denote
all \emph{finite} and $\Sigma^\omega$ denote all (right-) \emph{infinite}
words over $\Sigma$. Let $\eps$ denote the empty word. Letters are usually
denoted by $a$, $b$, or $c$, and words over $\Sigma$ are usually denoted by
$u$, $v$, or $w$ in this paper. The $i$-th letter of a word $w$ is denoted
by $w_{[i]}$, that is, $w=w_{[1]}w_{[2]}\cdots w_{[n]}$ if $w$ is finite,
and the length $n$ of~$w$ is denoted by $|w|$ as usual.

%

Besides $\Sigma$ we need another finite set $E$ of symbols. The elements
of $E$ are called \emph{variables} and we usually denote them by
$\alpha$, $\beta$, or $\gamma$. Words in $E^*$ are called \emph{patterns}.
For example $\alpha \beta \alpha \in E^*$ is a pattern consisting of
the variables $\alpha$ and $\beta$ in $E$. We assign to every pattern
a \emph{pattern language} over the alphabet $\Sigma$. This language contains
every word, that can be generated by substituting all variables in the pattern
by non-empty words in $\Sigma^*$. For example the pattern language of
the pattern $\alpha \alpha$ over $\Sigma = \oneset{a,b}$ is
$\oneset{aa, bb, aaaa, abab, baba, bbbb, \ldots}$.

We say that a word $w$ \emph{avoids} a pattern, if no factor of $w$ exists,
that is in the pattern language. On the other hand, if a factor of $w$ is
an element of the pattern language, we say $w$ \emph{contains} the pattern.
If for a given pattern $e$ and an alphabet $\Sigma$ with $k$ elements a word
$w \in \Sigma^\omega$ exists that avoids $e$, then we say that $e$ is
\emph{$k$-avoidable}. Otherwise we call $e$ \emph{$k$-unavoidable}.
We call $k \in \N$ the \emph{avoidance index $\Verm{e}$} of a pattern
$e \in E^*$, if $e$ is $k$-avoidable and $k$ is minimal.
If no such $k$ exists, we define $\Verm{e} = \infty$.

Let $f\colon \oneset{a,b}^* \rightarrow \oneset{a,b}^*$ with $a \mapsto ab$
and $b \mapsto ba$. The fixpoint $t=\lim_{k\to\infty}f^k(a)$ exists and
is called \emph{Thue--Morse word}. The following result is a~classical one.
\begin{theorem}[\cite{Thue:06,Thue:12}]\label{lem:thuemorse}
  The Thue--Morse word avoids the patterns $\alpha \alpha \alpha$ and
  $\alpha \beta \alpha \beta \alpha$.
\end{theorem}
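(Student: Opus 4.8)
The plan is to deduce both statements from the single stronger fact that the Thue--Morse word $t$ is \emph{overlap-free}: it has no factor of the form $c\,w\,c\,w\,c$ with $c\in\Sigma$ a letter and $w\in\Sigma^*$ an arbitrary (possibly empty) word, equivalently no factor of length $2p+1$ admitting the period $p$ (where ``$u$ has period $p$'' means $\IDX{u}{i}=\IDX{u}{i+p}$ whenever $1\le i\le \abs{u}-p$). This reduction is routine. A factor of $t$ matching $\alpha\alpha\alpha$ is a nonempty word $sss$, whose prefix of length $2\abs{s}+1\le 3\abs{s}$ has period $\abs{s}$ and hence is an overlap; a factor matching $\alpha\beta\alpha\beta\alpha$ is a word $s\,r\,s\,r\,s$ with $s,r$ nonempty, whose prefix of length $2(\abs{s}+\abs{r})+1\le 3\abs{s}+2\abs{r}$ has period $\abs{s}+\abs{r}$ and hence is an overlap. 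So once $t$ is known to be overlap-free, Theorem~\ref{lem:thuemorse} follows.

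To prove overlap-freeness I would exploit the self-similarity $t=f(t)$ together with the shape of $f$. Writing $\bar a=b$ and $\bar b=a$, we have $f(x)=x\bar x$, so $t$ decomposes into the blocks $ab$ and $ba$ occupying the position pairs $(1,2),(3,4),\dots$; in particular the two letters of each block differ, so $t$ has no factor $aa$ or $bb$ beginning at an odd position, and a fortiori no factor $aaa$ nor $bbb$. Now suppose $t$ contains an overlap and pick one, $v=c\,w\,c\,w\,c$, of minimal length, of period $p=1+\abs{w}$. A short direct inspection rules out $p\le 3$: for $p=1$ this is the forbidden cube $ccc$; for $p=2$ the candidates are $ababa$ and $babab$, whose occurrence in $t$ would, on passing to the block preimage, force $aaa$ or $bbb$; and for $p=3$ one checks that every word over $\oneset{a,b}$ of length $7$ and period $3$ has two equal adjacent letters at some position $j\le 3$, which by periodicity forces two more at $j+3$ --- impossible, since within one occurrence all equal-adjacent pairs lie at positions of a single fixed parity, and $j$ and $j+3$ differ in parity. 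Hence $p\ge 4$ and $\abs{v}\ge 9$.

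For $p\ge 4$ comes the descent step. Because the occurrence of $v$ is long, it is block-aligned except possibly at one endpoint; after either truncating it to, or extending it by one further letter of $t$ to, a block-aligned window --- the extra letter exists since $t$ is infinite, and its value ($\bar c$) is forced by the block structure --- the window equals $f(v')$ for some factor $v'$ of $t$ (a factor because $t=f(t)$ and the window is block-aligned) with $\abs{v'}\le p+1$. Transporting the period-$p$ relation of $f(v')$ through the identities $\IDX{f(x)}{2k-1}=\IDX{x}{k}$ and $\IDX{f(x)}{2k}=\overline{\IDX{x}{k}}$ then gives a contradiction in either parity of $p$. If $p$ is even, one uses the extended window so that $\abs{v'}=p+1$, and finds that $v'$ has period $p/2$; since $p+1=2(p/2)+1$, the word $v'$ is an overlap strictly shorter than $v$, contradicting minimality. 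If $p$ is odd, one gets $\IDX{v'}{k}=\overline{\IDX{v'}{k+r}}=\overline{\IDX{v'}{k+r+1}}$ with $r=(p-1)/2$, whence $\IDX{v'}{k+r}=\IDX{v'}{k+r+1}$ over a long range of $k$, producing a factor $aaa$ or $bbb$ inside $v'$ --- impossible. So no minimal overlap exists; $t$ is overlap-free, and the theorem follows.

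I expect the descent to be the real work. One must split on the parity of the starting position of the occurrence of $v$ and on the parity of $p$ (four sub-cases) and verify, in each, at which endpoint the window is not block-aligned, that the block structure forces the value of the adjoined boundary letter, and --- the subtle point --- that the de-substituted word $v'$ is long enough to count as a genuine overlap of period $\lceil p/2\rceil$ (this is exactly why, for even $p$, one extends the window by a letter instead of merely truncating, which would leave $v'$ one letter too short). The bounded cases $p\le 3$ are then precisely what stops the induction. Everything else is elementary bookkeeping once the block decomposition of $t$ is written down.
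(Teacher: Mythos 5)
The paper does not prove this statement at all: it is quoted as a classical theorem of Thue and dispatched with a citation, so there is no in-paper argument to compare yours against. Your proposal, however, is a correct and complete (at sketch level) proof, and it is essentially the classical one. The reduction is sound: a non-erasing image of $\alpha\alpha\alpha$ or of $\alpha\beta\alpha\beta\alpha$ has length at least $2p+1$ where $p$ is its natural period ($p=\abs{s}$, resp.\ $p=\abs{s}+\abs{r}$), so its prefix of length $2p+1$ is an overlap. The base cases are right: $p=1$ is a letter cube; $p=2$ de-substitutes to a letter cube; for $p=3$ the three adjacent pairs of $xyz x$ cannot all be unequal over two letters, and the resulting equal pair recurs three positions later at the opposite parity, which the block decomposition forbids. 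The descent also checks out, including the one genuinely delicate point you flag: for even $p$ the de-substituted word must have length $p/2+\,p/2+1$, which is exactly why the window has to be completed to a block-aligned word of length $p+1$ (the completing letter exists because exactly one endpoint of the odd-length occurrence is unaligned, and its value is forced by the block structure); for odd $p$ the two parity classes of the period relation combine to give $\IDX{v'}{k+r}=\IDX{v'}{k+r+1}$ for a range of $k$ of length $r+1\ge 3$, hence a forbidden $aaa$ or $bbb$. The only remark worth making is stylistic: the more common textbook formulation runs the same computation forwards (if $u$ is overlap-free then $f(u)$ is, applied inductively to the prefixes $f^n(a)$), which avoids setting up a minimal counterexample; your infinite-descent version is logically equivalent and equally rigorous once the four parity sub-cases are written out.
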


\section{Patterns with Involution}
For introducing patterns with involution, we extend the set of pattern
variables $E$ by adding $\theta(\alpha)$ for all variables $\alpha \in E$
and some involution $\theta$. For the rest of the article, we will stick to
this definition of $E$. Given a morphic or antimorphic involution, we build
the corresponding pattern language by replacing the variables by non-empty
words and, for variables of the form $\theta(\alpha)$, by applying the
involution after the substitution.

For example, let $\theta$ be the morphic involution with $a \mapsto b$ and
$b \mapsto a$ over $\Sigma = \oneset{a,b}$, and let the pattern be
$\alpha\,\theta(\alpha)$. We get the pattern language
$\oneset{ab, ba, aabb, abba, baab, bbaa, \ldots}$.
Every word in $\oneset{a,b}^\omega\setminus(a^\omega \cup b^\omega)$
contains the pattern $\alpha\,\theta(\alpha)$ for the morphic involution
$\theta$ with $a \mapsto b$ and $b \mapsto a$. 

\begin{observation}
  Let $\theta$ be a morphic or antimorphic involution and not the identity
  or reversal mapping. Then every pattern, that contains variables of
  the $\alpha$ and $\theta(\alpha)$, is avoidable.
\end{observation}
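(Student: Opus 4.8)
The plan is to avoid $p$ using a word over a single letter: in such a world any substitution for $\alpha$ is forced to be a block of that letter, while the required occurrence of $\theta(\alpha)$ then becomes a block of a \emph{different} letter. First I would unpack the hypothesis on $\theta$. For a morphic involution, being the identity means its underlying letter map is trivial; for an antimorphic involution, being the reversal mapping means the same thing. So the assumption that $\theta$ is neither of these is precisely the statement that $\theta$ moves some letter: fix distinct $a,b\in\Sigma$ with $\theta(a)=b$ (and hence $\theta(b)=a$); in particular $\abs{\Sigma}\ge 2$.

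I then claim that the constant word $a^{\omega}$ avoids $p$. Suppose not, so that some factor of $a^{\omega}$ lies in the pattern language of $p$ via a substitution $\sigma$ sending every variable to a non-empty word. Every factor of $a^{\omega}$ has the form $a^{n}$. Since $p$ contains an occurrence of $\alpha$, the word $\sigma(\alpha)$ occurs as a factor of the word obtained from $p$ under $\sigma$, hence is a factor of $a^{\omega}$; being non-empty, $\sigma(\alpha)=a^{m}$ for some $m\ge 1$. Since $p$ also contains an occurrence of $\theta(\alpha)$, the word $\theta(\sigma(\alpha))=\theta(a^{m})$ is likewise a factor of $a^{\omega}$. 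But $a^{m}$ is a power of a single letter, so the morphic and antimorphic cases coincide and $\theta(a^{m})=\theta(a)^{m}=b^{m}$, which contains the letter $b\ne a$ and is therefore not a factor of $a^{\omega}$ --- a contradiction.

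Hence no factor of $a^{\omega}$ belongs to the pattern language of $p$, so $a^{\omega}\in\Sigma^{\omega}$ avoids $p$ and $p$ is avoidable. I do not expect a genuine obstacle here; the proof is short, and the only step deserving a moment's care is the translation of ``not the identity or reversal mapping'' into ``the underlying letter map of $\theta$ is non-trivial'', carried out separately for the morphic and the antimorphic case, since this is exactly what produces the two distinct letters $a$ and $b$ that the whole argument rests on.
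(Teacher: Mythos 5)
Your proof is correct and follows essentially the same route as the paper: pick a letter $a$ with $\theta(a)\neq a$ (which exists because $\theta$ is not the identity/reversal map) and observe that $a^\omega$ avoids the pattern, since any substitution forces $\sigma(\alpha)=a^m$ while $\theta(a^m)=\theta(a)^m$ is not a factor of $a^\omega$. You simply spell out the details the paper leaves implicit.
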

Indeed, since $\theta$ is not the identity or reversal mapping,
a letter $a \in \Sigma$
with $\theta(a) \neq a$ exists. Therefore $w = a^\omega$ avoids every pattern
that includes variables $\alpha$ and $\theta(\alpha)$.

Because of this observation we do not have to examine, if patterns are
avoidable or unavoidable for a given involution. So we now change the point
of view. For a given pattern $e \in E^*$, we either look at all morphic or
all antimorphic involutions $\Sigma^* \rightarrow \Sigma^*$ at the same time.
So, we examine, for example, if an infinite word $w \in \Sigma^\omega$
exists, that avoids a pattern $e$ for all morphic involutions.

\begin{definition}
  Let $e \in E^*$ be a pattern, possibly with variables of the form
  $\theta(\alpha)$. We call $k \in \N$ the morphic (antimorphic)
  $\theta$-avoidance index $\mVerm{e}$ ($\aVerm{e}$) of $e \in E^*$, if
  an infinite word $w \in \Sigma^\omega$ over $\Sigma$ with $\abs{\Sigma}=k$
  exists, that avoids the pattern $e$ for all morphic (antimorphic)
  involutions $\Sigma^* \rightarrow \Sigma^*$ and $k$ is minimal.
  If this doesn't hold for any $k \in \N$, we define $\mVerm{e} = \infty$
  ($\aVerm{e} = \infty$).
\end{definition}

We establish the first facts about avoidance of pattern
$\alpha\,\theta(\alpha)\,\alpha$.

\begin{lemma}\label{lem:ATAA:gr:2:morph}
  Let $\Sigma$ be a binary alphabet. Then there is no word
  $w \in \Sigma^\omega$, that avoids the pattern
  $\alpha\,\theta(\alpha)\,\alpha$ for all morphic involutions
  $\theta\colon\Sigma^* \to \Sigma^*$. That is,
  $\mVerm{\alpha\,\theta(\alpha)\,\alpha}>2$.
\end{lemma}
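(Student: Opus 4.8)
The plan is to exploit the universal quantifier over morphic involutions. Among the morphic involutions of $\Sigma^* = \oneset{a,b}^*$ are both the identity $\iota$ and the letter exchange $\sigma$ with $\sigma(a)=b$, $\sigma(b)=a$. Hence any $w\in\Sigma^\omega$ that avoids $\alpha\,\theta(\alpha)\,\alpha$ for \emph{all} morphic $\theta$ must simultaneously avoid the $\theta=\iota$ instance, whose pattern language is exactly that of $\alpha\,\alpha\,\alpha$ (so $w$ is cube-free), and the $\theta=\sigma$ instances; substituting $\alpha\mapsto a$ and $\alpha\mapsto b$ in the latter shows $w$ must avoid the length-three factors $a\,\sigma(a)\,a=aba$ and $b\,\sigma(b)\,b=bab$. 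So I would first reduce the lemma to the claim: no infinite binary word is cube-free and avoids both $aba$ and $bab$.

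To prove this claim I would run a short run-length analysis. Suppose $w\in\Sigma^\omega$ has these three properties, and decompose $w$ into its maximal runs $a^i$/$b^j$. Cube-freeness forbids $aaa$ and $bbb$, so every run has length $1$ or $2$. Any interior $a$-run of length $1$ is flanked by $b$'s, yielding the forbidden factor $bab$; any interior $b$-run of length $1$ yields $aba$. Since $w$ is right-infinite it has no last run, so \emph{every run except possibly the first} has length exactly $2$. Together with the alternation of run letters this forces the tail of $w$ to be $(aabb)^\omega$ or $(bbaa)^\omega$, i.e.\ $w=p\,(aabb)^\omega$ or $w=p\,(bbaa)^\omega$ for some prefix $p$ of length at most $2$. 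But then $w$ contains the cube $(aabb)^3$ (respectively $(bbaa)^3$) as a factor, contradicting cube-freeness. Hence no such $w$ exists, and therefore $\mVerm{\alpha\,\theta(\alpha)\,\alpha}>2$.

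I do not expect a genuine obstacle here; the argument is elementary. The only slightly fiddly point is the bookkeeping at the start of $w$ — the first run may have length $1$ or the ``wrong'' parity relative to $(aabb)^\omega$ — which is why I would phrase the normal form as $p\,(aabb)^\omega$ / $p\,(bbaa)^\omega$ rather than claim $w$ is itself purely periodic. Conceptually the one thing to notice, and the reason the bound is merely $>2$, is that ranging over all morphic involutions hands us cube-freeness for free via $\theta=\iota$, on top of the constraints coming from the genuinely involutive map $\sigma$; on two letters those requirements are already mutually incompatible.
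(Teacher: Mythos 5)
Your proof is correct. The setup coincides with the paper's: both arguments specialize the universal quantifier to the identity and to the letter-swap involution $\sigma$, so that $w$ must avoid all cubes and, in particular, the four short factors $aaa$, $bbb$, $aba$, $bab$ --- exactly the factors the paper lists. Where you diverge is in how the contradiction is extracted. The paper forces letters one at a time from the left: assuming $w$ starts with $ab$ it derives the prefix $abbaab$, which is $\alpha\,\theta(\alpha)\,\alpha$ for $\alpha=ab$ and $\theta=\sigma$ (the case where $w$ starts with $aa$ being symmetric). You instead argue globally via the run decomposition: every maximal run has length at most $2$, every run after the first has length exactly $2$, hence the tail of $w$ is $(aabb)^\omega$ up to a cyclic shift, and that tail contains the cube $(aabb)^3$. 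Both routes are airtight; yours avoids the prefix case split and makes explicit which periodic word the constraints force, while the paper's prefix-forcing reaches a contradiction after only six letters. As a small simplification of your endgame, note that your normal form already contains the length-six factor $aabbaa=\alpha\,\sigma(\alpha)\,\alpha$ with $\alpha=aa$, so you could conclude without appealing to cube-freeness a second time.
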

\begin{proof}
  Let $\Sigma  = \oneset{a,b}$. We try to construct a word
  $w \in \Sigma^\omega$, that avoids $e=\alpha\,\theta(\alpha)\,\alpha$
  for all morphic involutions and bring this to a contradiction. For example,
  this word must not contain $aaa$, $bbb$, $aba$ or $bab$ as a factor.
  Without loss of generality $w$ begins with $a$.
  
  Case $1$: Assumed the word $w$ begins with $ab$. Then this prefix must
  be followed by $b$, $abb \pPRE w$. The next letter must be an $a$, the
  fifth must be an $a$ too. So we have $abbaa \pPRE w$. If the following
  letter is an $a$, $aaa$ is a factor of $w$. So the next letter must be
  the letter $b$. But for the morphic involution $\theta$ with $a \mapsto b$
  and $b \mapsto a$ the word $ab \theta(ab) ab$ is a factor of $w$.
  
  Case $2$: The argument for the case $aa \PRE w$ is analogous to case 1.
\end{proof}

The proof of the following lemma is analogous to the previous one.
\begin{lemma}\label{lem:ATAA:gr:2:antimorph}
  Let $\Sigma$ be a binary alphabet.
  There is no word $w \in \Sigma^\omega$, that avoids the pattern
  $\alpha\,\theta(\alpha)\,\alpha$ for all antimorphic involutions
  $\theta\colon\Sigma^* \to \Sigma^*$. That is,
  $\aVerm{\alpha\,\theta(\alpha)\,\alpha}>2$.
\end{lemma}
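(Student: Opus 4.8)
The plan is to run the proof of Lemma~\ref{lem:ATAA:gr:2:morph} essentially verbatim, substituting for the two morphic involutions used there two suitable antimorphic ones. Set $\Sigma = \oneset{a,b}$ and suppose, for contradiction, that some $w \in \Sigma^\omega$ avoids $e = \alpha\,\theta(\alpha)\,\alpha$ for every antimorphic involution $\theta\colon\Sigma^*\to\Sigma^*$.

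First I would collect the forbidden short factors. On a single letter an antimorphic involution acts just like an ordinary letter map, so the same one-letter substitutions as in the morphic case apply: with $\alpha\mapsto a$ and the antimorphic involution that fixes both letters (ordinary reversal), the word $w$ must not contain $aaa$, and symmetrically not $bbb$; moreover, using the antimorphic involution swapping $a$ and $b$, the substitution $\alpha\mapsto a$ excludes $aba$ and $\alpha\mapsto b$ excludes $bab$. Hence exactly the four factors $aaa$, $bbb$, $aba$, $bab$ are excluded, just as before.

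Now, as in the morphic proof, assume without loss of generality that $w$ begins with $a$ and distinguish the cases $ab\PRE w$ and $aa\PRE w$. In the case $ab\PRE w$ the excluded factors force the prefix letter by letter: $abb\pPRE w$, then $abba\pPRE w$, then $abbaa\pPRE w$, then $abbaab\pPRE w$. Since $abbaab = ab\cdot ba\cdot ab$ and the reversal involution $\theta$ satisfies $\theta(ab)=ba$, the word $\alpha\,\theta(\alpha)\,\alpha$ with $\alpha\mapsto ab$ is a factor of $w$ for this (antimorphic) $\theta$ --- a contradiction. In the case $aa\PRE w$ the excluded factors likewise force $aab\pPRE w$, then $aabb$, then $aabba$, then $aabbaa\pPRE w$; since $aabbaa = aa\cdot bb\cdot aa$ and the antimorphic involution $\theta$ with $a\mapsto b$ and $b\mapsto a$ satisfies $\theta(aa)=bb$, the word $\alpha\,\theta(\alpha)\,\alpha$ with $\alpha\mapsto aa$ is a factor of $w$ for this $\theta$ --- again a contradiction.

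The only point requiring care --- and the one place where the argument differs from the morphic proof, in which a single involution settled both cases --- is the bookkeeping of which antimorphic involution to invoke: the case $ab\PRE w$ needs the reversal involution, whereas the case $aa\PRE w$ needs the letter-swap involution (reversal there would only reproduce the already-excluded factor $aaa$). Beyond that the letter-forcing is identical to the previous lemma, so I anticipate no genuine obstacle.
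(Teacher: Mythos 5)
Your proof is correct and is exactly the "analogous" argument the paper alludes to for this lemma: the same four forbidden length-one factors force the prefix $abbaab$ or $aabbaa$, and you correctly identify which antimorphic involution (reversal vs.\ letter-swap) yields the contradiction in each case. No issues.
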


\section{Main Result}
In this section, we establish the $\theta$-avoidance indices for
the pattern $\alpha\,\theta(\alpha)\,\alpha$ in the morphic and
antimorphic case. We start with the morphic case.

\begin{theorem}\label{thm:ATA3}
  It holds that $\mVerm{\alpha \theta(\alpha) \alpha}=3$.
\end{theorem}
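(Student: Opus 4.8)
The inequality $\mVerm{\alpha\,\theta(\alpha)\,\alpha}\geq 3$ is exactly Lemma~\ref{lem:ATAA:gr:2:morph}, so the plan is to prove $\mVerm{\alpha\,\theta(\alpha)\,\alpha}\leq 3$, i.e.\ to exhibit an infinite word $w\in\Sigma^\omega$ over $\Sigma=\oneset{a,b,c}$ that avoids $e=\alpha\,\theta(\alpha)\,\alpha$ for \emph{every} morphic involution $\theta\colon\Sigma^*\to\Sigma^*$. First I would observe that a morphic involution of $\Sigma^*$ is length‑preserving (otherwise $\theta^2\neq\mathrm{id}$) and hence is an involutive permutation of $\Sigma$, so there are exactly four of them: the identity and the transpositions $(a\,b)$, $(a\,c)$, $(b\,c)$. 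For $\theta=\mathrm{id}$ the pattern specializes to $\alpha\alpha\alpha$, so $w$ must be cube‑free; for each transposition $\theta$ it must in addition have no factor $u\,\theta(u)\,u$ with $u\in\Sigma^+$.

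Before fixing the construction I would record what the case $\abs{u}=1$ forces: for the transposition $(x\,y)$ the word $w$ must avoid $xyx$, and being cube‑free it avoids $xxx$, so altogether $w_{[i]}\neq w_{[i+2]}$ for all $i$. Over three letters this rules out square‑freeness — requiring both $w_{[i]}\neq w_{[i+1]}$ and $w_{[i]}\neq w_{[i+2]}$ forces the period‑$3$ word $(abc)^\omega$, which contains the cube $(abc)^3$ and hence does not avoid $e$ for $\theta=\mathrm{id}$ — so the target word necessarily contains squares $xx$, and the construction has to be designed around that.

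The core of the argument is to produce one explicit witness: define a morphism $g\colon\Sigma^*\to\Sigma^*$, chosen so that it is cube‑free, so that $w=\lim_{k\to\infty}g^k(a)$ satisfies $w_{[i]}\neq w_{[i+2]}$, and so that $g$ ``transports'' the involutions, i.e.\ $g\bigl(\theta_1(x)\bigr)=\theta_2\bigl(g(x)\bigr)$ for suitable transpositions $\theta_1,\theta_2$ of $\Sigma$. Then I would (i) check on a sufficiently long prefix of $w$ that it is cube‑free, satisfies the distance‑two condition, and contains none of the short forbidden factors $u\,\theta(u)\,u$; (ii) lift cube‑freeness to all of $w$ by the usual test for power‑free morphisms; and (iii) rule out $u\,\theta(u)\,u$ for all lengths by contradiction: among all its occurrences in $w$, for any transposition, take one with $\abs{u}$ minimal, use a synchronization (circularity) property of $g$ to force the three blocks $u,\theta(u),u$ to align with the $g$‑image boundaries in the \emph{same} way, so that the occurrence equals $g(u')\,\theta_2(g(u'))\,g(u') = g\bigl(u'\,\theta_1(u')\,u'\bigr)$ for a shorter $u'$, contradicting minimality; the base cases below the synchronization threshold are the finite check of step (i).

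The main obstacle is exactly the synchronization step in (iii): it is not enough that each of the three blocks individually desubstitutes — one must show the alignments are mutually congruent, which needs $g$ to be genuinely circular on $w$, and this has to coexist with cube‑freeness, the distance‑two condition, and the compatibility $g\circ\theta_1=\theta_2\circ g$. Finding such a $g$ is where the real work lies; once it is in hand the rest is a bounded verification. If exhibiting a cleanly circular $g$ turns out to be awkward, the fallback is to bound a priori the length of a shortest forbidden $u\,\theta(u)\,u$ in terms of $g$ and then eliminate all shorter candidates by a finite search.
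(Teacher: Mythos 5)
Your lower bound is exactly the paper's (Lemma~\ref{lem:ATAA:gr:2:morph}), and your preliminary observations are correct and useful: a morphic involution of $\Sigma^*$ restricts to an involutive permutation of $\Sigma$, so only four cases arise; the length-one substitutions force $w_{[i]}\neq w_{[i+2]}$ for all $i$; and together with cube-freeness this shows the witness cannot be square-free. But the upper bound is where the entire content of the theorem lies, and your proposal stops short of it: no morphism $g$ and no word $w$ is exhibited. You explicitly defer the construction (``finding such a $g$ is where the real work lies''), so what you have is a proof strategy, not a proof. Moreover, the strategy as stated rests on a requirement that is both strong and unverified, namely that a single cube-free, circular $g$ satisfies $g(\theta_1(x))=\theta_2(g(x))$ for \emph{each} of the three transpositions $\theta_1$ simultaneously (with suitable $\theta_2$), on top of the distance-two condition; you give no reason such a $g$ exists, and the descent in step (iii) collapses without it.

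The paper's proof shows how to avoid that equivariance requirement entirely. It takes $w$ to be the image of the Thue--Morse word under $a'\mapsto aacb$, $b'\mapsto accb$, checks the cases $\abs{u}<7$ by a finite (computer) search, and for $\abs{u}\geq 7$ argues structurally: $u$ then contains $aacb$ or $accb$, so $\theta(u)$ contains a factor $\theta(a)\theta(a)\theta(c)$, i.e.\ a doubled letter; but the only doubled letters occurring in $w$ are $aa$ (always followed by $c$) and $cc$ (always followed by $b$), which forces $\theta$ to be the identity. The problem then reduces to ordinary cube-avoidance: $\abs{u}=4k$, the occurrences of $u$ align with the block structure, and one extracts a cube over the block alphabet $\oneset{x,y}$, contradicting Theorem~\ref{lem:thuemorse}. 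If you want to complete your proposal, the cleanest fix is to adopt this kind of ``doubled letters pin down $\theta$'' argument in place of the $g\circ\theta_1=\theta_2\circ g$ compatibility, and to replace the open-ended circularity step by the concrete block-alignment argument plus a finite check below the block length.
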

\begin{proof}
  Let $\Sigma$ an alphabet with three elements, $\Sigma = \oneset{a,b,c}$.
  Let $v$ be the infinitely long Thue--Morse word over the letters $a'$
  and $b'$. Furthermore let $w \in \Sigma^\omega$ be the word, that is the outcome
  of replacing every $a'$ in $v$ by $aacb$ and $b'$ by $accb$. We will show,
  that $w$ avoids the pattern $\alpha \theta(\alpha) \alpha$ for all morphic
  involutions. For better readability, we define $x = aacb$ and $y = accb$.

  We assume it exists a morphic involution $\theta$ and a substitution
  for $\alpha$, such that $\alpha \theta(\alpha) \alpha$ is a factor of $w$.
  Proof by contradiction. First, we examine the possibilities of replacing
  the variable $\alpha$ by words $u \in \Sigma^+$  of length $\abs{u}<7$.
  The word $u\,\theta(u)\,u$ has a maximal length of $18$. Therefore there must
  exist a morphic involution so that $u\,\theta(u)\,u$ is a factor of a word
  $w' \in \oneset{x,y}^6$. Because of Theorem~\ref{lem:thuemorse}, the words
  $xxx$, $yyy$, $xyxyx$ and $yxyxy$ can not be a factor of $w'$. A computer
  program can easily check these finite possibilities with the result,
  that no words $u$ and $w'$ exist, which fulfill the conditions. 
  Now we assume $\alpha$ gets replaced by a word $u \in \Sigma^+$ with
  $\abs{u} \geq 7$. Then, the word $u$ contains $aacb$ or $accb$. Without
  loss of generality, $u$ contains $aacb$. Therefore, $\theta(u)$ contains
  the factor $\theta(aac) = \theta(a)\,\theta(a)\,\theta(c)$.
  In addition $\theta(u)$ and for this reason $\theta(a)\,\theta(a)\,\theta(c)$
  is a factor of $w$. There are only two possibilities for two succeeding
  identical letters in $w$. Either these letters are two letters $c$ followed
  by the letter $b$, or two letters $a$ are followed by the letter $c$.
  This implies, that $u\,\theta(u)\,u$ can only be a~factor of $w$,
  if $\theta$ is the identity mapping. Furthermore this implies
  $\abs{u} = 4 \cdot k$ for a $k \in \N$. This is visualized in
  Fig.~\ref{fig:ATA3}, where $w_i, w_{i'}, w_{i''} \in \oneset{x,y}$ holds
  for all $0 \leq i \leq k$. If the word
  $\IDX{\myleft(w_{0}\right)}{2}\IDX{\myleft(w_{0}\right)}{3}
  \IDX{\myleft(w_{0}\right)}{4}$ or $\IDX{\myleft(w_{0}\right)}{1}
  \IDX{\myleft(w_{0}\right)}{2}\IDX{\myleft(w_{0}\right)}{3}
  \IDX{\myleft(w_{0}\right)}{4}=w_0$ is a prefix of the first $u$
  in Fig.~\ref{fig:ATA3}, then the following equations apply:
  \newlength{\tmpArrayLength}
  \settowidth{\tmpArrayLength}{$w_{k-1''}$}
  \begin{center}
  \begin{tabular}{C{\tmpArrayLength}cC{\tmpArrayLength}cC{\tmpArrayLength}}
    $w_0$ & $=$ & $w_{0'}$ & $=$ & $w_{0''}$ \tabularnewline
    $w_1$ & $=$ & $w_{1'}$ & $=$ & $w_{1''}$ \tabularnewline
    $\vdots$  & & $\vdots$ & & $\vdots$ \tabularnewline
    $w_{k-1}$ & $=$ & $w_{k-1'}$ & $=$ & $w_{k-1''}$ 
  \end{tabular}
  \end{center} 
  The word $w_0 w_1 \ldots w_{k-1} \, w_{0'} w_{1'} \ldots w_{k-1'} \, w_{0''}
  w_{1''} \ldots w_{k-1''} = \myleft( w_0 w_1 \ldots w_{k-1} \right)^3$
  is a factor of $w$. Because of $w_i \in \oneset{x,y} $ for all
  $0 \leq i \leq k-1$, this is a contradiction to Lemma~\ref{lem:thuemorse}.
  On the other hand, if only $\IDX{\myleft(w_{0}\right)}{3}
  \IDX{\myleft(w_{0}\right)}{4}$ or $\IDX{\myleft(w_{0}\right)}{4}$
  is a prefix of $u$, then $w_0 \neq w_{0'}$ is possible. But in this case
  $\IDX{\myleft(w_{k''}\right)}{1} \IDX{\myleft(w_{k''}\right)}{2}$ or
  $\IDX{\myleft(w_{k''}\right)}{1} \IDX{\myleft(w_{k''}\right)}{2}
  \IDX{\myleft(w_{k''}\right)}{3}$ is a suffix of the third $u$.
  This implies
  \begin{center}
  \begin{tabular}{C{\tmpArrayLength}cC{\tmpArrayLength}cC{\tmpArrayLength}}
    $w_1$ & $=$ & $w_{1'}$ & $=$ & $w_{1''}$ \tabularnewline
    $w_2$ & $=$ & $w_{2'}$ & $=$ & $w_{2''}$ \tabularnewline
    $\vdots$  & & $\vdots$ & & $\vdots$ \tabularnewline
    $w_{k}$ & $=$ & $w_{k'}$ & $=$ & $w_{k''}$ 
  \end{tabular}
  \end{center} 
  and $w_1 w_2 \ldots w_{k} \, w_{1'} w_{2'} \ldots w_{k'} \, w_{1''}
  w_{2''} \ldots w_{k''} = \myleft( w_1 w_2 \ldots w_{k} \right)^3$
  is a factor of $w$. Again, this is a contradiction
  to Lemma~\ref{lem:thuemorse}. The theorem follows with
  Lemma~\ref{lem:ATAA:gr:2:morph}.
\end{proof}

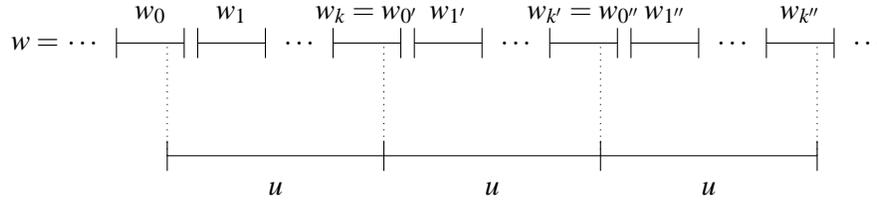
\begin{figure}
  \centering
    \begin{tikzpicture}[xscale=0.9]
    \draw (-1.2, 0) node {$w =$};
    \draw (-0.5, 0) node {$\ldots$};
    \intervall[(0,0)]{1}{w0}{$w_0$}{above}
    \intervall[(1.2,0)]{1}{w1}{$w_1$}{above}
    \draw (2.7,0) node {$\ldots$};
    \intervall[(3.2,0)]{1}{wk}{$w_k = w_{0'}$}{above}
    \intervall[(4.4,0)]{1}{w1a}{$w_{1'}$}{above}
    \draw (5.9,0) node {$\ldots$};
    \intervall[(6.4,0)]{1}{wka}{$w_{k'} = w_{0''}$}{above}
    \intervall[(7.6,0)]{1}{w1b}{$w_{1''}$}{above}
    \draw (9.1,0) node {$\ldots$};
    \intervall[(9.6,0)]{1}{wkb}{$w_{k''}$}{above}
    \draw (11.1, 0) node {$\ldots$};

    \intervall[(0.75,-1.5)]{3.2}{u1}{$u$}{below}
    \intleftopen[(3.95,-1.5)]{3.2}{u2}{$u$}{below}
    \intleftopen[(7.15,-1.5)]{3.2}{u3}{$u$}{below}
    \draw[dotted] (0.75, -1.5) -- (0.75,0);
    \draw[dotted] (3.95, -1.5) -- (3.95,0);
    \draw[dotted] (7.15, -1.5) -- (7.15,0);
    \draw[dotted] (10.35, -1.5) -- (10.35,0);
    \end{tikzpicture}
    \caption{Part of $w$ to illustrate the factor $uuu$}
    \label{fig:ATA3}
\end{figure}

The result of Theorem~\ref{thm:ATA3} transfers also to the antimorphic case.
\begin{theorem}\label{thm:ATA3:antimorph}
  It holds that $\aVerm{\alpha \theta(\alpha) \alpha}=3$.
\end{theorem}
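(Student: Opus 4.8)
I would prove $\aVerm{\alpha\,\theta(\alpha)\,\alpha}=3$ with the same word $w$ used in the proof of Theorem~\ref{thm:ATA3}: the Thue--Morse word over $\{a',b'\}$ with each $a'$ replaced by the block $x=aacb$ and each $b'$ by $y=accb$. By Lemma~\ref{lem:ATAA:gr:2:antimorph} we already have $\aVerm{\alpha\,\theta(\alpha)\,\alpha}>2$, so it suffices to show that $w$ avoids $\alpha\,\theta(\alpha)\,\alpha$ for every antimorphic involution. Assume the contrary: let $u\,\theta(u)\,u$ be a factor of $w$ with $\theta$ antimorphic, put $n=\abs{u}$, and let $\tau\in\{\mathrm{id},(ab),(ac),(bc)\}$ be the letter involution underlying $\theta$, so that $\theta$ reverses words and acts letterwise by $\tau$. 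For a factor of $w$, call its \emph{phase} the position in $\{1,2,3,4\}$ of its first letter within the block of $w$ containing it; the observation about consecutive equal letters in the proof of Theorem~\ref{thm:ATA3} says exactly that every occurrence of $x$ or $y$ in $w$ has phase $1$.

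Assume first $n\ge 7$, so that $u$ contains one of the blocks $x$, $y$, say a block $B$ at offset $d$. Comparing the copies of $B$ in the first and in the third occurrence of $u$ --- they lie $2n$ apart and both have phase $1$ --- gives $n\equiv 0\pmod 2$. In the middle factor $\theta(u)$ the block $B$ reappears as $\theta(B)$, i.e.\ as $B$ read backwards with each letter replaced by its $\tau$-image, at offset $n-d-2$; a short computation with the two congruences just obtained shows that $\theta(B)$ has, inside $w$, phase $1$ if $d$ is odd and phase $3$ if $d$ is even, and in particular never phase $2$ or $4$. This already eliminates two of the four $\tau$: if $\tau=\mathrm{id}$, then $\theta(B)$ is $bcaa$ or $bcca$, which is not even a factor of $w$ (it contains $bc$); if $\tau=(ac)$, then $\theta(B)$ is $bacc$ (for $B=x$) or $baac$ (for $B=y$), and each of these occurs in $w$ only with phase $4$. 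Hence $\tau\in\{(ab),(bc)\}$. Furthermore $\theta(x)=acbb$ contains $bb$ when $\tau=(ab)$, and $\theta(y)=cbba$ contains $bb$ when $\tau=(bc)$; since $bb$ is not a factor of $w$ whereas $\theta(u)$ is, the word $u$ has no complete $x$-block when $\tau=(ab)$ and no complete $y$-block when $\tau=(bc)$, so every complete block of $u$ is $y$, respectively $x$.

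In both surviving cases the complete blocks of $u$ are consecutive blocks of $w$, all of one type, hence form a factor $y^{k}$ (resp.\ $x^{k}$) of $w$; as $w$ contains no cube of a block by Theorem~\ref{lem:thuemorse}, $k\le 2$ and therefore $n\le 3+4\cdot 2+3=14$. So in every case $n\le 14$, whence $\abs{u\,\theta(u)\,u}\le 42$ and $u\,\theta(u)\,u$ is a factor of some word $w'\in\{x,y\}^{12}$ that avoids $xxx$, $yyy$, $xyxyx$ and $yxyxy$ (Theorem~\ref{lem:thuemorse}). A finite computer check over all such $w'$, all their factors $u$ of length at most $14$, and all four antimorphic involutions then shows no solution exists; with Lemma~\ref{lem:ATAA:gr:2:antimorph} this gives $\aVerm{\alpha\,\theta(\alpha)\,\alpha}=3$. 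I expect the phase computation of the second paragraph to be the only real point beyond Theorem~\ref{thm:ATA3}: forcing $\theta(B)$ into phase $1$ or $3$ is what kills the identity and $(ac)$, and the two remaining involutions are then cornered by the block-image observation (already present in Theorem~\ref{thm:ATA3}'s proof) and the cube-freeness of $w$ into a bounded, mechanically checkable range of substitution lengths.
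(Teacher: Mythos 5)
Your proposal has a fatal flaw at its very first step: the word $w$ from Theorem~\ref{thm:ATA3}, built from the blocks $x=aacb$ and $y=accb$, does \emph{not} avoid $\alpha\,\theta(\alpha)\,\alpha$ for all antimorphic involutions. Take $u=ac$ and the antimorphic involution $\theta$ with $a \mapsto b$, $b \mapsto a$, $c \mapsto c$. Then $\theta(u)=\theta(c)\,\theta(a)=cb$, so $u\,\theta(u)\,u = ac\,cb\,ac = accbac$. Since the Thue--Morse word contains $b'b'$, the word $w$ contains $yy=accb\,accb$, and hence contains $accbac$ as a factor. So your claimed finite computer check for $\abs{u}<7$ would in fact \emph{find} a solution rather than rule one out, and the whole argument collapses. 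This is precisely why the paper switches to the longer blocks $x=aabbc$ and $y=aaccb$ in the antimorphic case (where, for instance, $accb$ occurs only as a suffix of $aaccb$ and is therefore always followed by $aa$, blocking the factor $accbac$). Intuitively, the length-$4$ blocks are tuned to defeat morphic involutions, which preserve the left-to-right order of letters; antimorphic involutions reverse factors, and the reversed-and-permuted image of a short factor of one block can reappear at a different phase inside the block structure, as the example shows.

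The remainder of your argument is therefore moot, but note also that the structural part would need repair even over a correct word: your phase computation places $\theta(B)$ at offset $n-d-2$ in $\theta(u)$, whereas if $u=u'Bu''$ with $\abs{u'}=d$ and $\abs{B}=4$ then $\theta(u)=\theta(u'')\,\theta(B)\,\theta(u')$ puts $\theta(B)$ at offset $n-d-4$; and the reduction to $\abs{u}\le 14$ relies on the (incorrect) claim that all complete blocks inside $u$ must be of a single type. The paper's actual proof for the antimorphic case takes a different route after the finite check: for $\abs{u}\ge 9$ it locates a full block $w_1\in\oneset{x,y}$ near the end of the first $u$, shows that $\theta(w_1)$ must occur in the following two blocks, verifies by inspection of $xx$, $xy$, $yx$, $yy$ that this forces $w_2w_3=xx$ with $w_1=x$ (contradicting cube-freeness) except in one residual configuration $yxx$, and then eliminates the five possible alignments of $u$ directly. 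If you want to salvage your write-up, you must adopt the paper's word (or another correct one) and rerun the finite check honestly.
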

\begin{proof}
  This proof follows the proof of the previous theorem. Let $\Sigma$ be
  an alphabet with three elements, $\Sigma = \oneset{a,b,c}$. Further,
  let $v$ be the Thue-Morse word over the letters $a'$ and $b'$.
  Let $w \in \Sigma^\omega$ be the word, that we get by replacing $a'$ in
  $v$ by $aabbc$ and $b'$ by $aaccb$. We will show, that $w$ avoids the
  pattern $\alpha\,\theta(\alpha)\,\alpha$ for all antimorphic involutions.
  For better readability, we define $x = aabbc$ and $y = aaccb$.

  We assume that there exists an antimorphic involution and a substitution
  of $\alpha$ by a word $u \in \Sigma^+$ in such a way, that $u\,\theta(u)\,u$ is
  a factor of $w$. First we suppose that $\abs{u}<9$ holds. The word
  $u\,\theta(u)\,u$ then has a maximal length of $24$ and $u\,\theta(u)\,u$
  is factor of a word $w' \in \oneset{x,y}^6$. The word $xxx$, $yyy$,
  $xyxyx$, and $yxyxy$ must not be a factor of $w'$ because of
  Lemma~\ref{lem:thuemorse}. A computer program can check these finite
  possibilities with the result, that no words $u$ and $w'$ exist that 
  fulfill these conditions for an antimorphic involution $\theta$.
  So ,$\abs{u} \geq 9$ must hold and $u$ contains at least one word $x$
  or $y$ completely. We now look at the first $u$ of the factor
  $u\,\theta(u)\,u$ of $w$. Let $w_1 w_2' \SUF u$ with
  $w_1, w_2 \in \oneset{x,y}$, $w_2 = w_2' w_2 ''$ and $\abs{w_2'} < 5$.
  We get Fig.~\ref{fig:ATA3:antimorph} where $w_3, w_4 \in \oneset{x,y}$.
  Without loss of generality, let $w_1 = x = aabbc$. Then $\theta(u)$
  and therefore $w_2 w_3 w_4$ contains the word
  $\theta(aabbc) = \theta(c)\,\theta(b)\,\theta(b)\,\theta(a)\,\theta(a)$
  with length $5$ as a factor. Hence we look at the following words:
  \begin{align*}
    xx &= aabbc \, aabbc\\
    xy &= aabbc \, aaccb\\
    yx &= aaccb \, aabbc\\
    yy &= aaccb \, aaccb\ .
  \end{align*}
  Only $xx$ contains $\theta(c)\,\theta(b)\,\theta(b)\,\theta(a)\,\theta(a)$
  for the antimorphic involution $\theta$ with $a \mapsto b$, $b \mapsto a$,
  and $c \mapsto c$. Because of $w_1 = x$, the equation $w_2 w_3 = xx$ is
  a contradiction to Lemma~\ref{lem:thuemorse}. The case $w_2 w_3 w_4= y x x$
  remains. Now there are five possibilities for the position of $u$,
  see Fig.~\ref{fig:ATA3:antimorph2}. It is easy to check, that in all five
  cases $\theta(u) \PRE w_2'' w_3 w_4$ respectively
  $w_2'' w_3 w_4 \PRE \theta(u)$ doesn't hold. So our assumption, that there
  exists an antimorphic involution $\theta$ and a word $u \in \Sigma^+$ with
  $u\,\theta(u)\,u$ is a factor of $w$, was wrong. The theorem follows with
  Lemma~\ref{lem:ATAA:gr:2:antimorph}. 
\end{proof}

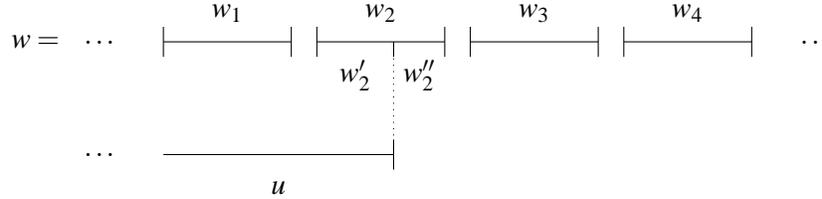
\begin{figure}
  \centering
  \begin{tikzpicture}[xscale=1.7]
    \draw (-1.0, 0) node {$w =$};
    \draw (-0.5, 0) node {$\ldots$};
    \intervall[(0,0)]{1}{w1}{$w_1$}{above}
    \intervall[(1.2,0)]{1}{w2}{$w_2$}{above}
    \addintbelow[2]{w2}{0.0}{0.6}{$w_2'$}
    \addintbelow[0]{w2}{0.6}{1.0}{$w_2''$}
    \intervall[(2.4,0)]{1}{w3}{$w_3$}{above}
    \intervall[(3.6,0)]{1}{w4}{$w_4$}{above}
    \draw (5.1, 0) node {$\ldots$};
    \draw (-0.5, -1.5) node {$\ldots$};
    \intleftopen[(0,-1.5)]{1.8}{u1}{$u$}{below}
    \draw[dotted] (1.8, -1.5) -- (1.8,-\intervallbracketheightbelow);
  \end{tikzpicture}
  \caption{Part of $w$ and the factor $u$ of $w$}
  \label{fig:ATA3:antimorph}
\end{figure}

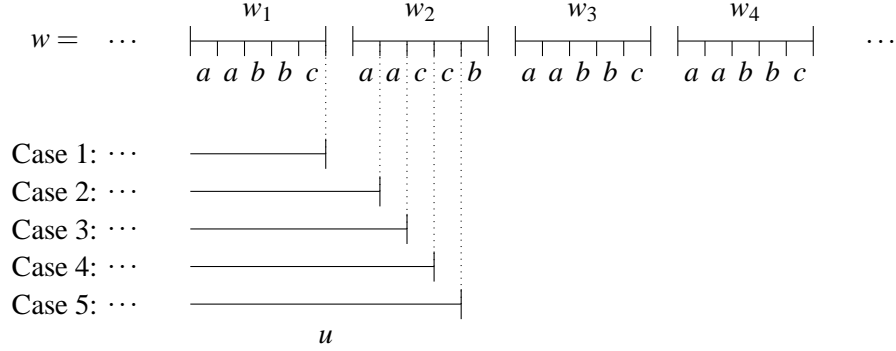
\begin{figure}
  \centering
  \begin{tikzpicture}[xscale=1.8]
    \draw (-1.0, 0) node {$w =$};
    \draw (-0.5, 0) node {$\ldots$};
    \intervall[(0,0)]{1}{w1}{$w_1$}{above}
    \addintbelow[2]{w1}{0.0}{0.2}{$a$}
    \addintbelow[2]{w1}{0.2}{0.4}{$a$}
    \addintbelow[2]{w1}{0.4}{0.6}{$b$}
    \addintbelow[2]{w1}{0.6}{0.8}{$b$}
    \addintbelow[0]{w1}{0.8}{1.0}{$c$}

    \intervall[(1.2,0)]{1}{w2}{$w_2$}{above}
    \addintbelow[2]{w2}{0.0}{0.2}{$a$}
    \addintbelow[2]{w2}{0.2}{0.4}{$a$}
    \addintbelow[2]{w2}{0.4}{0.6}{$c$}
    \addintbelow[2]{w2}{0.6}{0.8}{$c$}
    \addintbelow[0]{w2}{0.8}{1.0}{$b$}

    \intervall[(2.4,0)]{1}{w3}{$w_3$}{above}
    \addintbelow[2]{w3}{0.0}{0.2}{$a$}
    \addintbelow[2]{w3}{0.2}{0.4}{$a$}
    \addintbelow[2]{w3}{0.4}{0.6}{$b$}
    \addintbelow[2]{w3}{0.6}{0.8}{$b$}
    \addintbelow[0]{w3}{0.8}{1.0}{$c$}

    \intervall[(3.6,0)]{1}{w4}{$w_4$}{above}
    \addintbelow[2]{w4}{0.0}{0.2}{$a$}
    \addintbelow[2]{w4}{0.2}{0.4}{$a$}
    \addintbelow[2]{w4}{0.4}{0.6}{$b$}
    \addintbelow[2]{w4}{0.6}{0.8}{$b$}
    \addintbelow[0]{w4}{0.8}{1.0}{$c$}

    \draw (5.1, 0) node {$\ldots$};

    \draw (-1, -1.5) node {Case $1$:};
    \draw (-0.5, -1.5) node {$\ldots$};
    \intleftopen[(0,-1.5)]{1.0}{u1}{}{above}
    \draw[dotted] (1.0, -1.5) -- (1.0,0);

    \draw (-1, -2.0) node {Case $2$:};
    \draw (-0.5, -2.0) node {$\ldots$};
    \intleftopen[(0,-2.0)]{1.4}{u2}{}{below}
    \draw[dotted] (1.4, -2.0) -- (1.4,0);

    \draw (-1, -2.5) node {Case $3$:};
    \draw (-0.5, -2.5) node {$\ldots$};
    \intleftopen[(0,-2.5)]{1.6}{u3}{}{below}
    \draw[dotted] (1.6, -2.5) -- (1.6,0);

    \draw (-1, -3.0) node {Case $4$:};
    \draw (-0.5, -3.0) node {$\ldots$};
    \intleftopen[(0,-3.0)]{1.8}{u4}{}{below}
    \draw[dotted] (1.8, -3.0) -- (1.8,0);

    \draw (-1, -3.5) node {Case $5$:};
    \draw (-0.5, -3.5) node {$\ldots$};
    \intleftopen[(0,-3.5)]{2.0}{u5}{$u$}{below}
    \draw[dotted] (2.0, -3.5) -- (2.0,0);

  \end{tikzpicture}
  \caption{Illustration of possible positions of the factor $u$ of $w$}
  \label{fig:ATA3:antimorph2}%
\end{figure}

\section{Complementary Patterns}
In this section, patterns similar to $\alpha\,\theta(\alpha)\,\alpha$
are considered.

For the next lemma we need a further definition. Let $e \in E^*$ be a pattern
consisting of variables of the form $\alpha$ and $\theta(\alpha)$ and $e'$
be the pattern that we get, when all variables $\alpha$ and $\theta(\alpha)$
in $e$ are switched. We call $e' \in E$ the \emph{$\theta$-complementary}
pattern of $e$. For example the $\theta$-complementary pattern
of $\alpha\,\alpha\,\theta(\alpha)\,\beta$ is
$\theta(\alpha)\,\theta(\alpha)\,\alpha\,\theta(\beta)$. For this definition
it doesn't matter if morphic or antimorphic involutions are examined.

\begin{lemma}
  Let $e \in E^*$ be a pattern and $e' \in E$ be the $\theta$-complementary
  pattern of $e$. Then $\aVerm{e} = \aVerm{e'}$ and $\mVerm{e} = \mVerm{e'}$.
\end{lemma}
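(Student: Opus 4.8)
The plan is to prove the slightly stronger statement that, for a fixed alphabet $\Sigma$, an infinite word $w \in \Sigma^\omega$ avoids $e$ for all morphic (resp.\ antimorphic) involutions $\Sigma^* \to \Sigma^*$ if and only if it avoids $e'$ for all morphic (resp.\ antimorphic) involutions. Once this equivalence is in hand, the equality $\mVerm{e} = \mVerm{e'}$ (resp.\ $\aVerm{e} = \aVerm{e'}$) follows by taking $\Sigma$ of minimal admissible size. Moreover, since switching all variables of $e'$ again returns $e$, the $\theta$-complement is an involution on patterns, so it is enough to establish one direction of the equivalence.

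First I would set up the contrapositive: assume some $w$ contains $e'$ for a morphic (antimorphic) involution $\theta$, witnessed by a substitution of each variable $\alpha$ by a non-empty word $u_\alpha \in \Sigma^+$ whose associated image is a factor of $w$. The key observation is then that the same factor of $w$ witnesses that $w$ contains $e$ for the same involution $\theta$, under the modified substitution $\alpha \mapsto \theta(u_\alpha)$. To see this one compares $e$ and $e'$ position by position: a position holding $\alpha$ in $e$ holds $\theta(\alpha)$ in $e'$, so both substitutions place $\theta(u_\alpha)$ there; a position holding $\theta(\alpha)$ in $e$ holds $\alpha$ in $e'$, so $e'$ places $u_\alpha$ there and $e$ places $\theta(\theta(u_\alpha)) = u_\alpha$, using $\theta^2 = \mathrm{id}$. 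The resulting words are letter-for-letter equal, and this computation is insensitive to whether $\theta$ is morphic or antimorphic. Finally, since $\theta$ is a bijection fixing $\eps$, each $\theta(u_\alpha)$ is again non-empty, so the modified substitution is admissible.

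Combining the two directions yields that $w$ avoids $e$ for all morphic (antimorphic) involutions exactly when it avoids $e'$ for all of them, whence the two avoidance indices coincide.

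I do not anticipate a genuine difficulty; the only things that require a little care are the bookkeeping of which variable occurs at which position of $e$ versus $e'$ and the remark that the double substitution collapses precisely because $\theta$ is an involution, together with noting that $\theta$ preserves non-emptiness. The argument is uniform in the number of distinct variables occurring in $e$, so patterns with several variables need no separate treatment.
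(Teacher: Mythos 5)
Your proposal is correct and follows essentially the same route as the paper: both arguments transfer a witnessing substitution $\alpha \mapsto u_\alpha$ for one pattern to the substitution $\alpha \mapsto \theta(u_\alpha)$ for the complementary pattern under the same involution, using $\theta^2 = \mathrm{id}$ to see that the two instantiated words coincide, and then conclude by symmetry and negation. Your write-up is in fact slightly more careful than the paper's, spelling out the position-by-position check and the preservation of non-emptiness.
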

\begin{proof}
  First of all we show $\mVerm{e} = \mVerm{e'}$. For better readability,
  we replace the variable $\alpha$ in the pattern $e'$ by $\alpha'$
  and $\theta(\alpha)$ by $\theta(\alpha')$. We assume
  a word $w \in \Sigma^\omega$ contains the pattern $e$ for a morphic
  involution and a substitution of $\alpha$ by $u \in \Sigma^+$.
  Then $w$ contains the pattern $e'$ for the same morphic involution
  by substituting $\alpha'$ by $\theta(u)$. Symmetry reasons imply:
  \begin{align*}
    & \text{It exists a morphic involution } \theta \text{ so that } w
      \text{ contains the pattern } e \text{.}\\
    \Leftrightarrow \; \; & \text{It exists a morphic involution }
      \theta' \text{ so that } w \text{ contains the pattern } e' \text{.}
  \end{align*}
  By negation we get:
  \begin{align*}
    &\text{The word }w\in\Sigma^\omega \text{ avoids the pattern } e\text{.}\\
    \Leftrightarrow \; \; & \text{The word }w\in\Sigma^\omega
    \text{ avoids the pattern } e'\text{.}
  \end{align*}
  The equation $\mVerm{e} = \mVerm{e'}$ follows. The proof of
  $\aVerm{e} = \aVerm{e'}$ is identical.
\end{proof}

Note the following $\theta$-free patterns; see~\cite{Cassaigne:02}.
\begin{observation}\label{not:vermeidbar:lothaire}
  The patterns $\alpha\alpha$, $\alpha\alpha\beta$,
  $\beta\alpha\alpha$, $\alpha\alpha\beta\alpha$,
  $\alpha\beta\beta\alpha$, $\alpha\alpha\beta\beta$,
  $\alpha\beta\alpha\beta$, $\alpha\alpha\beta\alpha\alpha$,
  and $\alpha\alpha\beta\alpha\beta$ are $2$-unavoidable and $3$-avoidable.
\end{observation}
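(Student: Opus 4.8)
The plan is to split the statement into its positive half ($3$-avoidability of all nine patterns) and its negative half ($2$-unavoidability of all nine patterns), and to derive each half from one classical fact of Thue: there is a square-free word over three letters, while over two letters square-freeness fails already for short words.

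For the positive half the idea is to reduce all nine patterns to a single square. Each listed pattern contains a square as a \emph{pattern}-factor: $\alpha\alpha$ occurs in $\alpha\alpha$, $\alpha\alpha\beta$, $\beta\alpha\alpha$, $\alpha\alpha\beta\alpha$, $\alpha\alpha\beta\beta$, $\alpha\alpha\beta\alpha\alpha$ and $\alpha\alpha\beta\alpha\beta$; the square $\beta\beta$ occurs in $\alpha\beta\beta\alpha$; and $\alpha\beta\alpha\beta$ is itself the square $(\alpha\beta)(\alpha\beta)$. Since a word that contains a pattern $q$ also contains every pattern occurring as a factor of $q$ (just restrict the substitution to the variables that appear), any word avoiding $\alpha\alpha$ avoids all nine patterns at once. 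Thue's square-free word over a three-letter alphabet~\cite{Thue:06,Thue:12} avoids $\alpha\alpha$, hence avoids every pattern in the list, which proves $3$-avoidability.

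For the negative half I would start from the elementary observation that every binary word of length at least $4$ contains a square: either it has two equal consecutive letters (a square of period $1$), or it is $abab$ or $baba$ (a square of period $2$). This already settles $\alpha\alpha$, and it is essentially the only combinatorial input needed. I would then use the factor relation in the opposite direction: if $p$ is a factor of $q$ and $q$ is $2$-unavoidable, then $p$ is $2$-unavoidable, because every sufficiently long binary word containing $q$ also contains $p$. As $\alpha\alpha\beta$, $\alpha\alpha\beta\alpha$ and $\beta\alpha\alpha$ are all factors of $\alpha\alpha\beta\alpha\alpha$, and $\alpha\beta\alpha\beta$ is a factor of $\alpha\alpha\beta\alpha\beta$, it remains only to establish $2$-unavoidability for the four patterns $\alpha\beta\beta\alpha$, $\alpha\alpha\beta\beta$, $\alpha\alpha\beta\alpha\alpha$ and $\alpha\alpha\beta\alpha\beta$.

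For each of these four patterns $p$, being $2$-unavoidable is the same as saying that the set of binary words avoiding $p$ is finite, and I would prove this by a bounded case analysis on the block decomposition $w = a^{i_1} b^{i_2} a^{i_3} \cdots$ of a hypothetical avoiding word. Already the possibility of substituting $\alpha$ and $\beta$ by the same word makes $p$ at least as strong as a fixed power of $\alpha$, which forces every constant block to have bounded length; the remaining finitely many short-period and cross-block configurations are then excluded one at a time (for $\alpha\alpha\beta\alpha\alpha$ one tracks two sufficiently separated occurrences of $aa$ or of $bb$; for $\alpha\alpha\beta\beta$ one hunts for two adjacent nontrivial blocks; for $\alpha\beta\alpha\beta$, which is subsumed by $\alpha\alpha\beta\alpha\beta$, for a square of period at least $2$; for $\alpha\beta\beta\alpha$ for a factor $uvvu$), until only finitely many words survive. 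This is exactly the finite verification the authors attribute to~\cite{Cassaigne:02}, and it is where all the (purely mechanical) effort sits: the main obstacle is organising the case distinctions so that they are demonstrably exhaustive and reading off the correct length bounds, not any conceptual difficulty. Combining the two halves yields the Observation.
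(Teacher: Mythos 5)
The paper offers no proof of this Observation at all: it is stated as a known fact with a pointer to Cassaigne's survey, so any argument you give is already more than the source provides. Your positive half is correct and complete: every instance of each of the nine patterns contains a square (for $\alpha\beta\alpha\beta$ via the substitution $\gamma\mapsto\alpha\beta$ into $\gamma\gamma$, for the others because $\alpha\alpha$ or $\beta\beta$ occurs literally as a factor), so Thue's square-free ternary word avoids all nine simultaneously, giving $3$-avoidability in one stroke. Your reduction on the negative side is also sound: the divisibility order is used in the right direction ($p$ a pattern-factor of $q$ and $q$ $2$-unavoidable imply $p$ $2$-unavoidable), the length-$4$ argument settles $\alpha\alpha$, and the four patterns $\alpha\beta\beta\alpha$, $\alpha\alpha\beta\beta$, $\alpha\alpha\beta\alpha\alpha$, $\alpha\alpha\beta\alpha\beta$ are indeed the division-maximal elements of the list, so they are exactly what remains.

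The genuine gap is that for those four patterns you never actually prove $2$-unavoidability; you describe a strategy (bound the block lengths via the substitution $\alpha=\beta$, then ``exclude the remaining configurations one at a time'') and assert that only finitely many binary words survive. That assertion is the entire content of the negative half for these patterns, and nothing in your text certifies that the case analysis terminates with a finite language rather than with an infinite avoiding word; bounding block lengths alone leaves infinitely many candidates (e.g.\ all factors of $(ab)^\omega$), so the ``remaining configurations'' step is where the theorem lives and it is not carried out. The facts themselves are true and classical, and the König-lemma equivalence between $2$-unavoidability and finiteness of the avoiding language is fine, so your plan would succeed if executed; but as written the argument for $\Verm{\alpha\alpha\beta\beta}>2$, $\Verm{\alpha\beta\beta\alpha}>2$, $\Verm{\alpha\alpha\beta\alpha\alpha}>2$ and $\Verm{\alpha\alpha\beta\alpha\beta}>2$ is a promissory note. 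Either perform the (small, finite) backtracking search for each of the four patterns and report the maximal length of an avoiding binary word, or do what the authors do and cite the classification of binary patterns explicitly; a minor additional nit is that your parenthetical about $\alpha\beta\alpha\beta$ appears inside the list of four maximal patterns although you had already discharged it by the reduction.
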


\begin{lemma}\label{lem:vermeid:index}
  Let $e \in E^*$ be a pattern, that contains the variables $\alpha$
  and $\theta(\alpha)$. Further, $e$ contains no other variable of the
  form $\theta(\gamma)$. Let $e'$ be the pattern when all occurrences
  of $\theta(\alpha)$ in $e$ are replaced by $\alpha$. The pattern $e''$
  obtained when all occurrences of $\theta(\alpha)$ in $e$ are replaced by
  a new variable $\beta$.
  
  Then $\Verm{e'} \leq \mVerm{e} \leq \Verm{e''}$ and
  $\aVerm{e} \leq \Verm{e''}$.
\end{lemma}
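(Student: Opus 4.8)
The statement bundles three inequalities, so the plan is to prove each of the three relations $\Verm{e'} \leq \mVerm{e}$, $\mVerm{e} \leq \Verm{e''}$, and $\aVerm{e} \leq \Verm{e''}$ separately, each by a short contrapositive or direct-containment argument at the level of substitutions, exactly in the spirit of the $\theta$-complementary lemma just proved.

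\textbf{Step 1: $\mVerm{e} \leq \Verm{e''}$ and $\aVerm{e} \leq \Verm{e''}$.} I would prove both simultaneously since the involution type is irrelevant here. Suppose $w \in \Sigma^\omega$ avoids the ordinary pattern $e''$ (which has $\alpha$ and the fresh $\beta$ where $e$ had $\alpha$ and $\theta(\alpha)$), over an alphabet of size $\Verm{e''}$. I claim $w$ avoids $e$ for every morphic (resp.\ antimorphic) involution. Indeed, suppose for contradiction that some involution $\theta$ and some substitution $\alpha \mapsto u \in \Sigma^+$ make $e$ occur as a factor of $w$. Then setting $\alpha \mapsto u$ and $\beta \mapsto \theta(u)$ in $e''$ produces exactly the same word (since $e$ had $\alpha$ where $e''$ has $\alpha$ and $\theta(\alpha)$ where $e''$ has $\beta$, and no other $\theta$-variables occur by hypothesis), so $e''$ occurs in $w$ — contradiction. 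Hence $\mVerm{e} \leq \Verm{e''}$ and $\aVerm{e} \leq \Verm{e''}$.

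\textbf{Step 2: $\Verm{e'} \leq \mVerm{e}$.} Here the point is that the \emph{identity} is a morphic involution. Let $w \in \Sigma^\omega$ avoid $e$ for all morphic involutions, over an alphabet of size $\mVerm{e}$; in particular $w$ avoids $e$ for $\theta = \mathrm{id}$. Now $e'$ is obtained from $e$ by replacing every $\theta(\alpha)$ with $\alpha$, which is precisely what instantiating $e$ with the identity involution does. So a substitution $\alpha \mapsto u$ realizing $e'$ as a factor of $w$ would realize $e$ (with $\theta = \mathrm{id}$, same substitution) as a factor of $w$, contradicting the choice of $w$. Therefore $w$ avoids $e'$, giving $\Verm{e'} \leq \mVerm{e}$. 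I should note that the analogous lower bound fails in the antimorphic case because the identity is not antimorphic (unless $|\Sigma| \le 1$), which is exactly why the statement only asserts $\aVerm{e} \leq \Verm{e''}$ and not a matching lower bound.

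\textbf{Main obstacle.} There is no real obstacle; the only thing to be careful about is the bookkeeping in Step 1 — verifying that substituting $\beta \mapsto \theta(u)$ in $e''$ and $\alpha \mapsto u$, $\theta(\alpha)$ realized via $\theta$ in $e$ yield literally the same word. This uses that $e$ contains \emph{no} $\theta$-variable other than $\theta(\alpha)$ (the hypothesis), so no further application of $\theta$ is hidden anywhere in the pattern; without that hypothesis the correspondence between $e$ and $e''$ would break. The rest is routine negation/symmetry reasoning of the kind already used in the preceding lemma.
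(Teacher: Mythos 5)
Your proposal is correct and follows essentially the same route as the paper: the lower bound $\Verm{e'} \leq \mVerm{e}$ comes from specializing to the identity (which is a morphic involution), and both upper bounds come from observing that a word avoiding $e''$ avoids $e$ for every involution, via the correspondence $\beta \mapsto \theta(u)$. Your version simply spells out the substitution bookkeeping that the paper leaves implicit.
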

\begin{proof}
  The relation $\Verm{e'} \leq \mVerm{e}$ holds, since the morphic
  $\theta$-avoidance index considers all morphic involutions, including
  the identity mapping. Now say $\Verm{e''} = k$, i.e., a~word
  $w \in \Sigma^\omega$ exists, that avoids the pattern $e''$.Then this word
  also avoids the pattern $e$ for all morphic and antimorphic involutions.
  Therefore the relations $\mVerm{e} \leq \Verm{e''}$ and
  $\aVerm{e} \leq \Verm{e''}$ hold.
\end{proof}

\begin{lemma}\label{lem:AATA:3}
 It holds that
 $\aVerm{\alpha\,\alpha\,\theta(\alpha)} =
  \mVerm{\alpha\,\alpha\,\theta(\alpha)} = 3$.
\end{lemma}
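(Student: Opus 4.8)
The plan is to establish the two equalities by the same kind of substitution argument used in Theorem~\ref{thm:ATA3}, relying on Lemma~\ref{lem:vermeid:index} for the easy bounds. First I would observe that for the pattern $e = \alpha\,\alpha\,\theta(\alpha)$ we have $e' = \alpha\alpha\alpha$ and $e'' = \alpha\alpha\beta$. By Observation~\ref{not:vermeidbar:lothaire} the pattern $\alpha\alpha\beta$ is $3$-avoidable, and by Lemma~\ref{lem:vermeid:index} this gives $\mVerm{e} \leq \Verm{\alpha\alpha\beta} = 3$ and $\aVerm{e} \leq 3$ at once. So the whole content is the lower bound: I must show that over a binary alphabet $\Sigma = \{a,b\}$ no infinite word avoids $\alpha\,\alpha\,\theta(\alpha)$ for all morphic (resp.\ antimorphic) involutions.

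For the lower bound I would argue by contradiction in the style of Lemma~\ref{lem:ATAA:gr:2:morph}. Suppose $w \in \{a,b\}^\omega$ avoids $\alpha\,\alpha\,\theta(\alpha)$ for all morphic involutions. Taking $\alpha = a$ and $\theta = \mathrm{id}$ forces $w$ to avoid $aaa$, and similarly $bbb$; taking $\alpha = a$ and $\theta$ the non-trivial morphic involution ($a\mapsto b$, $b\mapsto a$) forces $w$ to avoid $aab$, and symmetrically $bba$. But a binary word avoiding $aaa$, $bbb$, $aab$ and $bba$ is extremely constrained: after a block $aa$ the next letter cannot be $a$ (gives $aaa$) and cannot be $b$ (gives $aab$), so $aa$ cannot occur at all; likewise $bb$ cannot occur; hence $w$ is $(ab)^\omega$ or $(ba)^\omega$. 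Then $abab$ is a factor, which is $\alpha\,\alpha\,\theta(\alpha)$ with $\alpha = ab$ and $\theta$ the non-trivial morphic involution. Contradiction, so $\mVerm{e} > 2$. For the antimorphic case the same opening works: $\theta = \mathrm{reversal}$ with $\alpha = a$ again forbids $aaa$ and $bbb$, while the antimorphic involution $a\mapsto b$, $b\mapsto a$ with $\alpha = a$ forbids $aab$ (since $\theta(aa) = bb$? — here I would instead use $\alpha = a$, giving $aa\,\theta(a) = aab$) and symmetrically $bba$; one then finishes exactly as before by noting $abab$ is a factor of $(ab)^\omega$ realized by an antimorphic involution (with $\alpha = ba$, $\theta(ba) = ab$ under $a \mapsto a$, $b\mapsto b$ composed with reversal, or more directly $\alpha = ab$ and an appropriate antimorphism). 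So $\aVerm{e} > 2$ as well.

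Combining $\mVerm{e} > 2$, $\aVerm{e} > 2$ with the upper bounds $\mVerm{e} \leq 3$ and $\aVerm{e} \leq 3$ yields $\aVerm{\alpha\,\alpha\,\theta(\alpha)} = \mVerm{\alpha\,\alpha\,\theta(\alpha)} = 3$. I expect the main obstacle to be purely bookkeeping in the antimorphic case: one has to be careful which short word $u$ and which antimorphic involution realize the pattern $abab$ inside $(ab)^\omega$, since for an antimorphism $\theta(u)$ reads $u$ backwards after applying the letter map, so the candidate substitutions differ from the morphic case. A clean way to sidestep this is to note that $(ab)^\omega$ contains $baba$, and with $\alpha = ba$ and the antimorphic involution fixing both letters we get $\theta(ba) = ab$, so $\alpha\,\alpha\,\theta(\alpha) = ba\,ba\,ab = babaab$ — which is not a factor; so instead I would take $\alpha = ab$, $\theta$ the antimorphism with $a\mapsto a$, $b\mapsto b$, giving $ab\,ab\,ba = ababba$, also not a factor. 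The correct choice is $\alpha = ab$ with the antimorphism $a \mapsto a$, $b\mapsto b$ applied to the reversal of a suitable occurrence; I would simply check the finitely many short substitutions against a short prefix of $(ab)^\omega$ and pick the one that works, exactly as the ``computer program'' checks in the earlier proofs. Once that single small verification is in hand, the argument is complete.
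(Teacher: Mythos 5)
Your overall strategy is exactly the paper's: upper bound via Observation~\ref{not:vermeidbar:lothaire} and Lemma~\ref{lem:vermeid:index}, lower bound by forcing any candidate binary word to be $(ab)^\omega$ and then exhibiting the pattern inside it. The reduction to $(ab)^\omega$ is fine (after $aa$ neither $a$ nor $b$ can follow, and likewise for $bb$). But the final step --- the step that actually carries the lower bound --- is not done correctly in either case, and it is not optional: if no substitution realized $\alpha\,\alpha\,\theta(\alpha)$ in $(ab)^\omega$, that word would be an avoiding word and the lower bound would be false. In the morphic case your stated witness is wrong twice over: with $\alpha=ab$ the instance $\alpha\,\alpha\,\theta(\alpha)$ has length $6$, not $4$, so it cannot be $abab$; and with the non-trivial morphic involution it equals $ab\,ab\,ba=ababba$, which is not a factor of $(ab)^\omega$. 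The correct (and simplest) choice is $\alpha=ab$ with $\theta$ the identity, giving $ababab$. In the antimorphic case you try $\alpha=ba$ with the letter-fixing antimorphism and $\alpha=ab$ with the letter-fixing antimorphism, find that neither works, and then leave the verification to an unspecified finite search. The involution you never test is the one that works: take $\alpha=ab$ and the antimorphic involution with $a\mapsto b$, $b\mapsto a$; then $\theta(ab)=\theta(b)\,\theta(a)=ab$, so $\alpha\,\alpha\,\theta(\alpha)=ab\,ab\,ab=ababab$, a factor of $(ab)^\omega$. This is precisely the witness the paper uses. With these two substitutions supplied, your argument closes and coincides with the paper's proof.
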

\begin{proof}
  According to Observation~\ref{not:vermeidbar:lothaire} the equation
  $\Verm{\alpha\,\alpha\,\beta} = 3$ holds. Lemma~\ref{lem:vermeid:index}
  implies $\aVerm{\alpha\,\alpha\,\theta(\alpha)}$,
  $\mVerm{\alpha\,\alpha\,\theta(\alpha)} \leq 3$.
  We show by contradiction, that it holds that
  $\aVerm{\alpha\,\alpha\,\theta(\alpha)} \neq 2$. The proof for
  the relation $\mVerm{\alpha\,\alpha\,\theta(\alpha)} \neq 2$ is analogous.
  Assuming a word $w \in \Sigma^\omega$ with $\Sigma = \oneset{a,b}$ exists
  that avoids the pattern $\alpha\,\alpha\,\theta(\alpha)$ for all antimorphic
  involutions. Then $w$ contains neither $aa$ nor $bb$ as a factor. Without
  loss of generality $w$ begins with the letter $a$. It follows
  that $w = (ab)^\omega$. But $w = (ab)^\omega$ contains the pattern
  $\alpha\,\alpha\,\theta(\alpha)$ for $\alpha=ab$ and the antimorphic
  involution defined by $a \mapsto b$ and $b \mapsto a$. This is
  a contradiction to our assumption. Therefore
  $\aVerm{\alpha\,\alpha\,\theta(\alpha)} \neq 2$ holds and analogously
  $\mVerm{\alpha\,\alpha\,\theta(\alpha)} \neq 2$. We get
  $\aVerm{\alpha\,\alpha\,\theta(\alpha)} =
  \mVerm{\alpha\,\alpha\,\theta(\alpha)} = 3$.
\end{proof}

\begin{lemma}\label{lem:TAAA:3}
  It holds that $\aVerm{\theta(\alpha)\,\alpha\,\alpha} =
                 \mVerm{\theta(\alpha)\,\alpha\,\alpha} = 3$.
\end{lemma}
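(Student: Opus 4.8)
The plan is to mirror the proof of Lemma~\ref{lem:AATA:3}, using the two inequalities of Lemma~\ref{lem:vermeid:index} for the upper bound and a short direct argument for the lower bound. For the upper bound, set $e = \theta(\alpha)\,\alpha\,\alpha$; then $e''$ is obtained by replacing $\theta(\alpha)$ with a new variable $\beta$, giving $e'' = \beta\,\alpha\,\alpha$, which is the pattern $\beta\,\alpha\,\alpha$ appearing in Observation~\ref{not:vermeidbar:lothaire} with $\Verm{\beta\,\alpha\,\alpha} = 3$. Hence Lemma~\ref{lem:vermeid:index} immediately yields $\aVerm{\theta(\alpha)\,\alpha\,\alpha} \le 3$ and $\mVerm{\theta(\alpha)\,\alpha\,\alpha} \le 3$.

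For the lower bound, I would argue by contradiction exactly as in Lemma~\ref{lem:AATA:3}. Suppose some $w \in \Sigma^\omega$ over $\Sigma = \oneset{a,b}$ avoids $\theta(\alpha)\,\alpha\,\alpha$ for all antimorphic (resp. morphic) involutions. Substituting $\alpha$ by a single letter and taking $\theta$ to be the morphic involution $a \mapsto b$, $b \mapsto a$ shows $w$ may contain neither $baa$ nor $abb$; but this is too weak, so instead observe that taking $\alpha = a$ already forbids the factor $\theta(a)\,aa$ for every admissible $\theta$, and in particular for $\theta = \mathrm{id}$ this forbids $aaa$, while for $a \mapsto b$ it forbids $baa$. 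More cleanly: the subpattern $\alpha\,\alpha$ obtained by ignoring the $\theta(\alpha)$ block is not what constrains us; rather, as in Lemma~\ref{lem:AATA:3}, I claim $w$ cannot contain $aa$ or $bb$. Indeed if $aa \PRE$ some factor of $w$, then since $w$ is infinite the factor $aa$ extends on the left by some letter $d \in \{a,b\}$, giving $daa$; choosing the morphic or antimorphic involution with $\theta(d) \ne$ that letter is impossible here, so I instead use the same trick as before on the square block. The right formulation is: $w$ contains $\alpha\,\alpha$ in the ordinary sense infinitely often unless $w$ is square-free over two letters, which is impossible; so $w$ has a factor $cc$ for some $c \in \Sigma$, and then $cc$ is preceded by the other letter $\bar c$ (else $ccc \subseteq w$ and taking $\theta = \mathrm{id}$, $\alpha = c$ gives the pattern). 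Thus $\bar c c c$ is a factor, and with $\alpha = c$ and the antimorphic involution $a \leftrightarrow b$ we have $\theta(c) = \bar c$, so $\theta(\alpha)\,\alpha\,\alpha = \bar c\, c\, c$ is a factor of $w$ — contradiction. The morphic case is identical since $\theta(c) = \bar c$ there too and $\theta$ applied to a single letter agrees for morphic and antimorphic involutions.

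The one subtlety I must handle carefully is the reduction "either $w$ is square-free or it has a factor $\bar c\,c\,c$": over a binary alphabet any word of length $\ge 4$ contains a square (since $\{a,b\}$ has no square-free word of length $\ge 4$), so $w$ has a factor $cc$; if every occurrence of $cc$ inside $w$ were preceded by $c$ we would get $ccc$, which $\alpha = c$, $\theta = \mathrm{id}$ already forbids, so some occurrence of $cc$ is preceded by $\bar c$, yielding $\bar c c c$ as a factor. This is the only place a short combinatorial argument is needed, and it is routine. I would then write: "Therefore $\aVerm{\theta(\alpha)\,\alpha\,\alpha} \ne 2$, and analogously $\mVerm{\theta(\alpha)\,\alpha\,\alpha} \ne 2$; combined with the upper bounds above we conclude $\aVerm{\theta(\alpha)\,\alpha\,\alpha} = \mVerm{\theta(\alpha)\,\alpha\,\alpha} = 3$." Alternatively, and even more economically, I could invoke the previous lemma directly: the pattern $\theta(\alpha)\,\alpha\,\alpha$ is obtained from $\alpha\,\alpha\,\theta(\alpha)$ by reversing, and since the Thue–Morse-type constructions and all the arguments are symmetric under reversal — and more to the point, the $\theta$-complementary pattern of $\theta(\alpha)\,\alpha\,\alpha$ is $\alpha\,\theta(\alpha)\,\theta(\alpha)$, which is the reversal-complement pair of $\alpha\,\alpha\,\theta(\alpha)$ — a clean one-line proof reads: "The pattern $\theta(\alpha)\,\alpha\,\alpha$ has the same avoidance indices as $\alpha\,\alpha\,\theta(\alpha)$ by the obvious mirror symmetry of factors in infinite words, so the claim follows from Lemma~\ref{lem:AATA:3}." I expect the main obstacle to be nothing deep, merely deciding which of these equivalent routes gives the crispest exposition; I would present the explicit contradiction argument for self-containedness.
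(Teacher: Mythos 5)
Your upper bound is fine and matches the paper: $e''=\beta\,\alpha\,\alpha$ has $\Verm{\beta\,\alpha\,\alpha}=3$ by Observation~\ref{not:vermeidbar:lothaire}, so Lemma~\ref{lem:vermeid:index} gives both $\theta$-avoidance indices $\leq 3$. The problem is in your lower bound. Your pivotal step is the inference ``$w$ is not square-free over $\oneset{a,b}$, so $w$ has a factor $cc$ for some letter $c$.'' That inference is false: a word can contain a square without containing a single-letter square, the witness being exactly $(ab)^\omega$, whose squares all have the form $(uv)^2$ with $|uv|\geq 2$. Consequently your case analysis never touches the word $w=(ab)^\omega$ (nor $a(ab)^\omega$, etc.), and this is precisely the hard case: your single-letter substitutions only rule out $aaa$, $bbb$, $baa$, $abb$, all of which $(ab)^\omega$ avoids. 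To kill this case you must substitute $\alpha=ab$ and pick a suitable involution (e.g.\ $\theta=\mathrm{id}$ gives $\theta(ab)\,ab\,ab=ababab$, a factor of $(ab)^\omega$), which is exactly the closing move in the paper's proof of Lemma~\ref{lem:AATA:3} that the paper's one-line proof of Lemma~\ref{lem:TAAA:3} asks you to replicate. A second, smaller hole: your claim that any occurrence of $cc$ is ``preceded by'' some letter ignores the possibility that $cc$ occurs only as a prefix of the one-sided infinite word $w$, which again funnels into the alternating case you have not treated.

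The correct structure (the paper's, transported to $\theta(\alpha)\,\alpha\,\alpha$) is: first show $aa$ and $bb$ cannot occur in $w$ at all --- any internal occurrence of $cc$ extends on the left to $ccc$ or $\bar c\,c\,c$, both forbidden via $\alpha=c$ with $\theta=\mathrm{id}$ resp.\ the swap, and a prefix occurrence $cc$ forces $w$ to begin $cc\bar c(c\bar c)^\omega$, which still contains $(c\bar c)^3$; then conclude $w$ is eventually $(ab)^\omega$ up to renaming and exhibit the pattern there with $\alpha=ab$. Your parenthetical ``mirror symmetry'' alternative would also work, but as stated it silently uses the equivalence between avoidability by right-infinite words and by arbitrarily long finite words (to make sense of reversal), and you explicitly decline to present it; the argument you do present has the gap above.
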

\begin{proof}
  The proof is analogous to the proof of Lemma~\ref{lem:AATA:3}.
\end{proof}

\begin{corollary}
  $ $
  \begin{enumerate}
    \item $\mVerm{\theta(\alpha)\,\alpha\,\theta(\alpha)} =
             \aVerm{\theta(\alpha)\,\alpha\,\theta(\alpha)} = 3$
      by Theorem~\ref{thm:ATA3} and~\ref{thm:ATA3:antimorph}.
    \item $\mVerm{\theta(\alpha)\,\theta(\alpha)\,\alpha} =
           \aVerm{\theta(\alpha)\,\theta(\alpha)\,\alpha} = 3$
      by Lemma~\ref{lem:AATA:3}.
    \item $\mVerm{\alpha\,\theta(\alpha)\,\theta(\alpha)} =
           \aVerm{\alpha\,\theta(\alpha)\,\theta(\alpha)} = 3$
      by Lemma~\ref{lem:TAAA:3}.
  \end{enumerate}
\end{corollary}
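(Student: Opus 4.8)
The plan is to obtain all six equalities from the lemma stating that a pattern $e$ and its $\theta$-complementary pattern $e'$ satisfy $\mVerm{e}=\mVerm{e'}$ and $\aVerm{e}=\aVerm{e'}$, together with the avoidance indices already computed: $\mVerm{\alpha\,\theta(\alpha)\,\alpha}=\aVerm{\alpha\,\theta(\alpha)\,\alpha}=3$ from Theorems~\ref{thm:ATA3} and~\ref{thm:ATA3:antimorph}, $\mVerm{\alpha\,\alpha\,\theta(\alpha)}=\aVerm{\alpha\,\alpha\,\theta(\alpha)}=3$ from Lemma~\ref{lem:AATA:3}, and $\mVerm{\theta(\alpha)\,\alpha\,\alpha}=\aVerm{\theta(\alpha)\,\alpha\,\alpha}=3$ from Lemma~\ref{lem:TAAA:3}.

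First I would verify the three complementarity relations, which is purely a matter of syntax: interchanging every occurrence of $\alpha$ with $\theta(\alpha)$ turns $\alpha\,\theta(\alpha)\,\alpha$ into $\theta(\alpha)\,\alpha\,\theta(\alpha)$, turns $\alpha\,\alpha\,\theta(\alpha)$ into $\theta(\alpha)\,\theta(\alpha)\,\alpha$, and turns $\theta(\alpha)\,\alpha\,\alpha$ into $\alpha\,\theta(\alpha)\,\theta(\alpha)$. Thus each of the three patterns in the corollary is the $\theta$-complementary pattern of one whose morphic and antimorphic $\theta$-avoidance indices are already known to equal~$3$.

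Applying the $\theta$-complementarity lemma to each of these pairs then immediately yields item~1, namely $\mVerm{\theta(\alpha)\,\alpha\,\theta(\alpha)}=\mVerm{\alpha\,\theta(\alpha)\,\alpha}=3$ and $\aVerm{\theta(\alpha)\,\alpha\,\theta(\alpha)}=\aVerm{\alpha\,\theta(\alpha)\,\alpha}=3$; item~2, namely $\mVerm{\theta(\alpha)\,\theta(\alpha)\,\alpha}=\mVerm{\alpha\,\alpha\,\theta(\alpha)}=3$ and $\aVerm{\theta(\alpha)\,\theta(\alpha)\,\alpha}=\aVerm{\alpha\,\alpha\,\theta(\alpha)}=3$; and item~3, namely $\mVerm{\alpha\,\theta(\alpha)\,\theta(\alpha)}=\mVerm{\theta(\alpha)\,\alpha\,\alpha}=3$ and $\aVerm{\alpha\,\theta(\alpha)\,\theta(\alpha)}=\aVerm{\theta(\alpha)\,\alpha\,\alpha}=3$. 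There is essentially no obstacle to overcome here: the corollary is a direct bookkeeping consequence of the complementarity lemma, and the only point requiring care is matching each pattern correctly with the one it complements.
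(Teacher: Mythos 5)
Your proposal is correct and matches the paper's intended argument exactly: each of the three patterns is the $\theta$-complementary pattern of one whose indices were established in Theorems~\ref{thm:ATA3} and~\ref{thm:ATA3:antimorph}, Lemma~\ref{lem:AATA:3}, and Lemma~\ref{lem:TAAA:3} respectively, and the complementarity lemma transfers the value $3$ in each case. The pattern matchings you list are all verified correctly, so nothing further is needed.
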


\section*{Acknowledgement}
We would like to thank the anonymous referees for their careful reviews and
constructive comments.

\bibliography{BibtexDatabase}
\bibliographystyle{eptcs}

\end{document}